\newtheorem{theorem}{Theorem}
\newtheorem{proposition}[theorem]{Proposition}
\newtheorem{corollary}[theorem]{Corollary}
\newtheorem{lemma}[theorem]{Lemma}
\theoremstyle{definition}
\newtheorem{definition}[theorem]{Definition}
\newtheorem{example}{Example}
\newcommand{\F}{{\mathbb{F}}}
\newcommand{\R}{{\mathbb{R}}}
\newcommand{\N}{{\mathbb{N}}}
\newcommand{\floor}[1]{\left\lfloor#1\right\rfloor}
\newcommand{\polyring}[1]{\F_2[x_1,\dots,x_m]^{#1}}
\newcommand{\ev}{\textnormal{ev}_{\F_2^m}}
\newcommand{\short}{\textnormal{short}}
\newcommand{\punct}{\textnormal{punct}}
\newcommand{\supp}{\textnormal{supp}}
\newcommand{\CSS}{\textnormal{CSS}}
\DeclareMathOperator{\RM}{RM}
\title{CSS-T Codes from Reed-Muller Codes} 
\author[]{Emma Andrade, Jessalyn Bolkema, Thomas Dexter, Harrison Eggers, Victoria Luongo, Felice Manganiello, Luke Szramowski}
\date{}
\begin{document}

\maketitle

\begin{abstract}
    CSS-T codes are a class of stabilizer codes introduced by Rengaswamy \emph{et al} with desired properties for quantum fault-tolerance. In this work, we comprehensively study non-degenerate CSS-T codes built from Reed-Muller codes. These classical codes allow for constructing CSS-T code families with nonvanishing asymptotic rates up to $\frac{1}2$ and possibly diverging minimum distance when non-degenerate. 
\end{abstract}

\section{Introduction}

Quantum error correcting codes (QECC) and methods to achieve fault-tolerance are essential to overcome the large number of errors that happen to data and computations in a quantum computer. A set of quantum gates that are universal is $\{H,S,CX,T\}$, where the Hadamard gate $H$, the phase gate $S$, and the controlled-NOT gate $CX$ form a generating set of the Clifford group and $T$ is a non-Clifford gate. Given a $\llbracket n,k,d \rrbracket$ stabilizer QECC, a challenging problem is to characterize the gates on the $k$ logical qubits encoded using transversal gates on the $n$ physical qubits. Transversal gates are desirable for fault-tolerance as they prevent the spread of errors between qubit levels.

Rengaswamy \emph{et al.} provides in \cite{CSS-T-it,CSS-T-isit}   algebraic conditions on stabilizer QECC for their codespace to be preserved under physical transversal $T$ and $T^\dagger$ gates. This is equivalent to saying that these physical gates encode some logical gates.  
 They also prove that, for any non-degenerate stabilizer code that admits transversal $T$ and $T^\dagger$, a CSS code with the same parameters admitting the same gates exists. This implies that the latter family of codes, denoted CSS-T, is an optimal family of stabilizer codes admitting transversal $T$ and $T^\dagger$. Note that this is a unique property of CSS-T codes since it is common knowledge that CSS codes are generally not optimal in the family of stabilizer codes. CSS codes are a well-studied family of stabilizer codes that rely on classical codes for their definition. Many families of CSS codes have been constructed since their definition in  \cite{CSofCSS} and \cite{SofCSS}; for an up-to-date list of CSS code constructions, please visit  \cite{eczoo_css}.  

\newpage
In \cite{CSS-T-isit}, the authors propose an open problem: find families of CSS-T codes with nonvanishing asymptotic rate and asymptotic relative distance. These properties would be desirable as they are a prerequisite for constant overhead magic state distillation (MSD) \cite{MSD}. 
Motivated by this challenge, we comprehensively study CSS-T codes from Reed-Muller codes in this work. Other constructions of quantum error correcting codes based on Reed-Muller codes have been considered in \cite{zh97,st99,CSS-T-it}, but never considered for CSS-T code construction. These codes allow constructing CSS-T code families with nonvanishing asymptotic rates up to $\frac{1}2$. Although the family does not have a nonvanishing relative minimum distance if the codes are non-degenerate,  the minimum distance diverges in specific cases, 
giving a partial answer to the challenge set out in \cite{CSS-T-isit}. Our initial findings were presented as a preprint \cite{ourpaper}, and subsequent studies, such as \cite{berardini_structure_2024,camps-moreno_algebraic_2024,campsmoreno2024quantumcsstcodessparse,csstallerton}, have provided validations of this work. Notably, \cite[Theorem 4.9]{berardini_structure_2024} demonstrates that the family of CSS-T codes examined in this manuscript is optimal with respect to both relative minimum distance and code rate.

This paper is organized as follows. In Section \ref{s:def+not}, we establish the notation we use throughout the manuscript. We recall the definition of CSS-T codes and state the problem we address. 
In Section \ref{s:asympt}, we study the asymptotic properties of families of Reed-Muller codes, focusing on the asymptotic rate and relative distance. In this section, we establish fundamental limitations of working only with Reed-Muller codes, meaning that families of CSS codes defined only by Reed-Muller codes must have vanishing asymptotic relative distance. In Section \ref{s:CSS-T-RM}, we focus on constructing CSS-T codes specifically using Reed-Muller codes. We translate the defining properties of Reed-Muller codes to the classical operations of puncturing and shortening linear codes and prove the main theorem, Theorem \ref{t:css-t-family}. We conclude in Section \ref{s:examples} with some clarifying examples. 

\section{Definitions and Notations}\label{s:def+not}

This section states the definitions and the notation needed throughout the manuscript. For definitions of stabilizer codes and other basic properties, we refer the reader to \cite{lidar_brun_2013,gottesman1997stabilizer}, whereas we refer to \cite{huffman_pless_2003} for a reader interested in classical coding theory.

The Calderbank-Shor-Steane (CSS) codes \cite{CSofCSS,SofCSS} are a special class of stabilizer codes constructed from classical error-correcting codes. Specifically,  if $C_1$ is  $[n,k_1,d_1]$, $C_2$ is  $[n,k_2,d_2]$, $C_2\subseteq C_1$, and $d_2^\perp$ denotes the minimum distance of $C_2^\perp$, then $CSS(C_1,C_2)$ is a $\llbracket n, k_1-k_2, \geq \min\{d_1,d_2^\perp\}\rrbracket$ quantum stabilizer code. In this paper, we consider the case of non-degenerate CSS codes. If $G_2$ and $G_1^\perp$ are generator matrices for the codes $C_2$ and $C_1^\perp$, respectively, then one can write a generator matrix for the binary representation of stabilizers as 
\[G_S = \left[\begin{array}{c|c} 0  & G_1^\perp \\ \hline G_2 & 0\end{array}\right].\]

Rengaswamy \emph{et al.} suggest in \cite{CSS-T-it,CSS-T-isit} that a subfamily of CSS codes supports a physical transversal $T$ and $T^\dagger$ gates. This is equivalent to saying that these physical gates encode some logical gates. The subfamily is defined as follows.

\begin{definition}[CSS-T codes \cite{CSS-T-isit}] \label{d:csst}
A CSS-T code is a CSS code associated with a pair $(C_1, C_2)$ of binary linear codes satisfying 
\begin{enumerate} 
\item $C_2$ is an even code, meaning that all codewords of $C_2$ have even weight, and
\item for each codeword $x \in C_2$, there exists a self-dual code in $C_1^\perp$ that is supported on $x$. 
\end{enumerate}
\end{definition}

We elaborate on this definition and its equivalent formulations in Section \ref{s:CSS-T-RM}. 

Throughout, we consider Reed-Muller codes as candidates for both $C_1$ and $C_2$ in the definition of a CSS-T code. 
We denote by $\polyring{r}$ the set of all polynomials in $m$ variables with a total degree less than or equal to $r$. For $p\in \polyring{}$, we denote by $\ev(p)$ the vector obtained by evaluation of $p$ at $\F_2^m$; hence if $\F_2^m=\{v_1,\dots,v_{2^m}\}$, then 
\[\ev(p)=(p(v_1),\dots,p(v_{2^m})).\]

\begin{definition}
Given $r, m \in \N$ with $r\leq m$, the $r^{th}$ order Reed-Muller code of length $2^m$, denoted $\RM(r,m)$, is the subspace of $\F_2^{2^m}$ give by 
\[\RM(r,m) = \{ \ev(f) \mid f \in \F_2[x_1, x_2, \ldots, x_m]^r\}.\]
\end{definition}

In the following sections, we utilize the fact that $\RM(r,m)$ is a $[2^m, \sum_{i=0}^r {m\choose i}, 2^{m-r}]$ linear code, as well as the nested property of Reed-Muller codes, meaning that \[\RM(r_2, m) \subseteq \RM(r_1,m)\] if $r_2 \leq r_1$. Finally, we recall that $\RM(r,m)^\perp = \RM(m-r-1, m)$. 

If $C$ is a linear code of length $n$ and $x\in C$, then $\supp(x)=\{i\in [n] \mid x_i\neq 0\}$. If $C=\RM(r,m)$ is a Reed-Muller code, then $\F_2^m$ is its index set, and if $x\in C$, then $\supp(x)\subseteq \F_2^m$. If $p\in \polyring{r}$, then we denote with $\supp(p)$ the support of $x=\ev(p)\in \RM(r,m)$, meaning that $\supp(p)=\supp(x)$.

\section{Reed-Muller Codes' Asymptotic Properties }\label{s:asympt}

We begin our analysis with the fundamental parameters of Reed-Muller codes, establishing asymptotic relative minimum distance in Proposition~\ref{prop:dist} and asymptotic rate in Proposition~\ref{prop:rate}.
\begin{proposition}
Let $C(s)= \RM(r(s), m(s))$ for $s\in \N$ with $\lim_{s\to \infty}m(s)=\infty$. The sequence of codes $(C(s))_{s\in \N}$ has a nonvanishing relative minimum distance if and only if $\lim_{s \to \infty} r(s) = c$.
\label{prop:dist}
\end{proposition}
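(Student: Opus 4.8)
The plan is to reduce the whole statement to the closed-form parameters of Reed--Muller codes recalled in Section~\ref{s:def+not}. Since $\RM(r,m)$ is a $[2^m,\ \sum_{i=0}^{r}\binom{m}{i},\ 2^{m-r}]$ code, the family $C(s)=\RM(r(s),m(s))$ has block length $n(s)=2^{m(s)}$ and minimum distance $d(s)=2^{m(s)-r(s)}$, so its relative minimum distance is exactly
\[
\delta(s)\ :=\ \frac{d(s)}{n(s)}\ =\ 2^{-r(s)},
\]
which depends only on the integer sequence $r(s)$. I read ``non-vanishing minimum distance'' as ``$\lim_{s\to\infty}\delta(s)$ exists and is strictly positive''; the hypothesis $m(s)\to\infty$ enters only to guarantee that the codes are eventually well defined (one needs $r(s)\le m(s)$, which holds for large $s$ whenever $r(s)$ is bounded) and that $n(s)\to\infty$, so that ``asymptotic'' is meaningful. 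The entire proof then amounts to the elementary fact that $2^{-r(s)}$ converges to a positive limit if and only if the integer sequence $r(s)$ is eventually constant.

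For the ``if'' direction I would note that $\lim_{s\to\infty}r(s)=c$, with $r(s)$ integer-valued, forces $r(s)=c$ for all sufficiently large $s$; hence $\delta(s)=2^{-c}$ eventually, and $\lim_{s\to\infty}\delta(s)=2^{-c}>0$, so the family has non-vanishing minimum distance.

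For the ``only if'' direction I would argue by contraposition, splitting on whether $(r(s))$ is bounded. If $(r(s))$ is unbounded, choose a subsequence with $r(s_j)\to\infty$; then $\delta(s_j)=2^{-r(s_j)}\to 0$, so $\delta(s)$ cannot tend to a positive limit. If $(r(s))$ is bounded but does not converge, it has two distinct subsequential limits $c_1\neq c_2$ (nonnegative integers); since $x\mapsto 2^{-x}$ is strictly decreasing, $\delta(s)$ then has the two distinct subsequential limits $2^{-c_1}\neq 2^{-c_2}$, so $\liminf_s\delta(s)<\limsup_s\delta(s)$ and $\lim_s\delta(s)$ fails to exist. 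In either case the family does not have non-vanishing minimum distance.

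I do not expect a real obstacle here: once the identity $\delta(s)=2^{-r(s)}$ is written down, everything is routine. The only points needing a little care are in the ``only if'' direction --- ruling out an oscillating $r(s)$ in both the bounded and the unbounded regime --- and fixing the meaning of ``non-vanishing'' at the outset, so that a bounded but oscillating $r(s)$ (which keeps $\liminf_s\delta(s)>0$ yet has no limit) is correctly excluded.
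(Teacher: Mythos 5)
Your proof takes essentially the same approach as the paper: both reduce the statement to the identity $\delta(s)=2^{-r(s)}$ for the relative minimum distance and then relate convergence of the integer sequence $r(s)$ to convergence of $\delta(s)$. You are considerably more careful than the paper's two-line argument --- explicitly splitting the converse into the unbounded and bounded-but-oscillating cases and pinning down the intended meaning of ``non-vanishing'' --- but the underlying idea and computation are identical.
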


\begin{proof} 
The code $\RM(r,m)$ has minimum distance $2^{m-r}$ and therefore relative minimum distance $2^{m-r}/2^m=2^{-r}$. Thus, the asymptotic relative minimum distance of $C(s)$ is nonzero exactly when $\lim_{s\to \infty} r(s)=c$ for some $c\in \R$.
\end{proof}

\begin{proposition}
Let $C(s)=\RM(r(s),m(s))$ for $s\in \N$ such that $\lim_{s\to \infty}m(s)=\infty$. Let $R(C(s))$ be the rate of code $C(s)$. The asymptotic rate of the sequence $(C(s))_{s\in \N}$ is     \[\lim_{s\to \infty}R(C(s)) = \Phi\left (\lim_{s \to \infty} \left (\frac{2r(s) - m(s)}{\sqrt{m(s)}}\right)\right)\] where $\Phi$ is the standard normal Gaussian CDF.
\label{prop:rate}
\end{proposition}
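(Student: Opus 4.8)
The plan is to identify the rate of a Reed--Muller code with the cumulative distribution function of a binomial random variable, and then to invoke the Central Limit Theorem in a form that is uniform in the evaluation point. Since $\RM(r,m)$ has dimension $\sum_{i=0}^{r}\binom{m}{i}$ and length $2^m$,
\[
R(C(s)) \;=\; \frac{1}{2^{m(s)}}\sum_{i=0}^{r(s)}\binom{m(s)}{i} \;=\; \Pr\bigl[X_{m(s)}\le r(s)\bigr],
\]
where $X_m$ denotes a sum of $m$ independent $\mathrm{Bernoulli}(1/2)$ random variables, i.e.\ a $\mathrm{Binomial}(m,1/2)$ variable with mean $m/2$ and standard deviation $\sqrt m/2$. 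Writing $t(s) = \bigl(2r(s)-m(s)\bigr)/\sqrt{m(s)}$ and standardizing, this reads
\[
R(C(s)) \;=\; \Pr\!\left[\frac{X_{m(s)}-m(s)/2}{\sqrt{m(s)}/2}\;\le\; t(s)\right].
\]

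Next I would compare this probability to $\Phi(t(s))$. Because the threshold $t(s)$ varies with $s$, the ordinary pointwise CLT does not immediately apply; instead I would use the Berry--Esseen theorem, which for a normalized sum of $m$ i.i.d.\ summands with finite absolute third moment yields $\sup_{t\in\R}\bigl|\Pr[(X_m-m/2)/(\sqrt m/2)\le t]-\Phi(t)\bigr|\le C/\sqrt m$ for an absolute constant $C$; the $\mathrm{Bernoulli}(1/2)$ summands trivially satisfy the moment hypothesis. Consequently
\[
\bigl|R(C(s))-\Phi(t(s))\bigr| \;\le\; \frac{C}{\sqrt{m(s)}} \;\longrightarrow\; 0
\]
as $s\to\infty$, since $m(s)\to\infty$. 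If one prefers to avoid quoting Berry--Esseen, one can sandwich $t(s)$ between two fixed constants near its limit and combine the plain CLT with monotonicity of distribution functions; I would include this as an alternative route.

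Finally I would pass to the limit in $\Phi(t(s))$. If $t(s)\to\ell$ with $\ell\in\R$, then $\Phi(t(s))\to\Phi(\ell)$ by continuity of $\Phi$; if $t(s)\to+\infty$ (respectively $-\infty$), then $\Phi(t(s))\to 1$ (respectively $0$), matching $\Phi$ evaluated at the limit under the usual conventions. In every case the displayed bound forces $R(C(s))\to\Phi\bigl(\lim_{s\to\infty}t(s)\bigr)$, which is the assertion. I expect the only genuine subtlety to be the uniformity in the second step --- recognizing that a moving cutoff forces one past the pointwise CLT --- while the identification of the rate with a binomial tail and the case analysis at $\pm\infty$ are routine.
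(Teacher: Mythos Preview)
Your proof is correct and follows the same approach as the paper: identify the rate with a binomial tail probability and pass to the normal approximation. The paper's version simply invokes the Central Limit Theorem without addressing the moving-threshold issue you correctly flag, so your Berry--Esseen step (and the sandwich alternative) actually closes a small gap that the published argument leaves open.
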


Such analysis of the asymptotic rate of Reed-Muller codes appears elsewhere in the literature; see \cite{pfister-friends}. We include a proof in the interest of completeness. 

\begin{proof}
Consider a binomial random variable $X\sim B(m,\frac{1}{2})$. The rate of the code $\RM(r,m)$ is given by \[ \sum_{i=0}^r {m \choose i}\left(\frac{1}{2}\right)^m \] which coincides exactly with the cumulative probability $P(X\leq r)$. As $m\to\infty$, by the Central Limit Theorem, $X$ tends towards a normal distribution $\mathcal{N}(m/2,m/4)$ with associated cumulative density function given by  $P(X\leq r)=\Phi\left(\frac{2r-m}{\sqrt{m}}\right)$. Thus as $m\to \infty$, the rate of the code $\RM(r,m)$ behaves as $\Phi\left(\frac{2r-m}{\sqrt{m}}\right)$. That is, 
\begin{align*}
    \lim_{s\to \infty} R(C(s))   & = \lim_{s\to \infty} \Phi \left (\frac{2r(s) - m(s)}{\sqrt{m(s)}}\right)\\ 
                                & =  \Phi\left (\lim_{s \to \infty} \left (\frac{2r(s) - m(s)}{\sqrt{m(s)}}\right)\right)
\end{align*}
\end{proof}

We can now establish sufficient conditions for the nonvanishing rate. 
\begin{proposition}
Let $C(m)=\RM(\floor{\frac{m-1}{2}} - t(m), m)$ for $m\in \N$. Then $(C(m))_{m\in \N}$ has nonvanishing asymptotic rate if and only if\[\lim_{m\to \infty} \frac{t(m)}{\sqrt{m}}\neq \infty.\]
\label{prop:vanish}
More specifically, it holds that 
\begin{itemize}
\item $\mathcal{R}\in [0,\frac{1}{2}]$ if $\lim_{m\to \infty} \frac{t(m)}{\sqrt{m}}\geq 0$, and 
\item $\mathcal{R}\in [\frac{1}{2},1]$ if  $\lim_{m\to \infty} \frac{t(m)}{\sqrt{m}}\leq 0$.
\end{itemize}
\end{proposition}
\begin{proof}
By previous proposition, the family ${C(m)}_{m\in \N}$ has asymptotic rate \[\mathcal{R}=\Phi\left(\lim_{m\to\infty} \frac{2 \floor{\frac{m-1}{2}}- 2t(m) -m}{\sqrt{m}} \right)=\Phi\left(- 2 \lim_{m\to\infty} \frac{ t(m)}{\sqrt{m}} \right).  \] Thus $\mathcal{R}\neq 0$ if and only if \[\Phi\left(\lim_{m\to\infty} \frac{2 \floor{\frac{m-1}{2}}- 2t(m) -m}{\sqrt{m}} \right) \neq 0,\] or equivalently, \[\lim_{m\to\infty} \frac{2 \floor{\frac{m-1}{2}}- 2t(m) -m}{\sqrt{m}} = - 2 \lim_{m\to\infty} \frac{ t(m)}{\sqrt{m}} \neq -\infty,\]  which corresponds to 
\[\lim_{m\to \infty} \frac{t(m)}{\sqrt{m}}= \infty.\]

\end{proof}

Alas, Proposition~\ref{prop:dist} and Proposition~\ref{prop:rate} together mean that a family of Reed-Muller codes cannot have both nonvanishing rate and nonvanishing minimum relative distance.
\begin{proposition}
A family of Reed-Muller codes $\RM(r(m), m)$ cannot have both nonvanishing rate
and nonvanishing relative distance.

\end{proposition}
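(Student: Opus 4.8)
The plan is to prove the statement as an immediate corollary of the two preceding propositions, arguing by contraposition: I will show that a family of Reed--Muller codes with nonvanishing relative distance is forced to have vanishing rate, so the two properties cannot coexist. The only ingredients are Proposition~\ref{prop:dist}, Proposition~\ref{prop:rate}, and the elementary fact that an integer-valued sequence which converges is eventually constant.

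First I would suppose the family $\RM(r(m),m)$ has nonvanishing asymptotic relative distance. Applying Proposition~\ref{prop:dist} to the sequence $C(s)=\RM(r(s),s)$, for which $m(s)=s\to\infty$, this forces $\lim_{m\to\infty}r(m)=c$ for some $c\in\R$. Since each $r(m)$ is a non-negative integer, the sequence $(r(m))_{m}$ must be eventually constant; say $r(m)=r_0$ for all $m\ge m_0$. Then for $m\ge m_0$ we have $\frac{2r(m)-m}{\sqrt m}=\frac{2r_0-m}{\sqrt m}\to-\infty$ as $m\to\infty$, so by Proposition~\ref{prop:rate} the asymptotic rate of the family equals $\Phi\!\left(\lim_{m\to\infty}\frac{2r(m)-m}{\sqrt m}\right)=\Phi(-\infty)=0$. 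Thus the rate vanishes, contradicting the hypothesis that it does not, and the proposition follows. (One could alternatively route the argument through Proposition~\ref{prop:vanish}, but its hypothesis that $r(m)=\floor{\frac{m-1}{2}}+t(m)$ with $t(m)\ge 0$ is more restrictive than needed, so the direct combination above is cleaner.)

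There is essentially no hard step here; the result is a direct consequence of Propositions~\ref{prop:dist} and~\ref{prop:rate}. The only points requiring a little care are the passage from ``$r(m)$ converges'' to ``$r(m)$ is eventually constant,'' which relies on the order parameter being integer-valued, and the convention $\Phi(-\infty)=0$, which is what makes a rate expressed through the Gaussian CDF genuinely vanish in the limit.
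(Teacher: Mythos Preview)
Your proof is correct and follows essentially the same route as the paper: assume nonvanishing relative distance, use Proposition~\ref{prop:dist} to force $r(m)\to c$, compute $\frac{2r(m)-m}{\sqrt m}\to -\infty$, and conclude via Proposition~\ref{prop:rate} that the rate is $\Phi(-\infty)=0$. Your added remark that an integer-valued convergent sequence is eventually constant is a nice bit of extra care, but the paper simply substitutes $c$ directly in the limit and reaches the same conclusion.
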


\begin{proof}
Suppose the family $\{\RM(r(m),m)\}_{m\in \N}$ has nonvanishing relative distance, and so $r(m)\to c$ for some $c\in \R$ as $m\to\infty$. It follows that \[\lim_{m \to \infty} \left (\frac{2r(m) - m}{\sqrt{m}}\right) = \lim_{m\to \infty} \left( \frac{2c-m}{\sqrt{m}} \right)= -\infty.\] The asymptotic rate of $\{\RM(r(m),m)\}_{m\in \N}$ can thus be evaluated 
\[ \mathcal{R} = \Phi(-\infty) = 0. \]
\end{proof}

\section{CSS-T codes from Reed-Muller Codes}\label{s:CSS-T-RM}

Having established the asymptotic parameters of classical Reed-Muller codes, we are now prepared to analyze the corresponding parameters in the quantum case. First, we observe that a CSS code constructed from RM codes must necessarily have a vanishing (quantum) distance.

    

\begin{corollary}
Let $C_1(s)= RM(r_1(s), m(s))$ and $C_2(s)= RM(r_2(s), m(s))$ for $s\in \N$ with $\lim_{s\to \infty}m(s)=\infty$. The sequence of CSS codes \[(\CSS(C_1(s),C_2(s)))_{s\in \N},\] if non-degenerate, has vanishing relative minimum distance.
\end{corollary}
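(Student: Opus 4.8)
The plan is to exhibit, for each $s$, an explicit nonzero logical operator of $\CSS(C_1(s),C_2(s))$ whose weight is small relative to the block length $n=2^{m(s)}$. Throughout write $m=m(s)$ and $r_i=r_i(s)$, and recall that for the CSS construction to be defined we need $C_2(s)\subseteq C_1(s)$, i.e.\ $r_2\le r_1$; if $r_2=r_1$ the code encodes no logical qubits and there is nothing to prove, so assume $r_2<r_1$. I will use the standard fact that the quantum minimum distance of $\CSS(C_1,C_2)$ satisfies $d\le\min\{d_X,d_Z\}$, where $d_X$ is the smallest Hamming weight of a vector in $C_1\setminus C_2$ (the minimum weight of a coset representative of $C_2$ in $C_1$, these cosets being the logical $X$-operators) and $d_Z$ is the smallest Hamming weight of a vector in $C_2^\perp\setminus C_1^\perp$.

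The key step is to bound each of $d_X$ and $d_Z$ using the strictly nested structure of Reed-Muller codes together with their known minimum weights recalled in Section~\ref{s:def+not}. Since $\RM(r_1,m)$ has minimum distance $2^{m-r_1}$ while every nonzero codeword of $\RM(r_2,m)$ has weight at least $2^{m-r_2}>2^{m-r_1}$, any minimum-weight codeword of $C_1$ lies outside $C_2$; hence $d_X\le 2^{m-r_1}$. Dually, $C_1^\perp=\RM(m-r_1-1,m)\subsetneq\RM(m-r_2-1,m)=C_2^\perp$, a minimum-weight codeword of $C_2^\perp$ has weight $2^{r_2+1}$, and every nonzero codeword of $C_1^\perp$ has weight at least $2^{r_1+1}>2^{r_2+1}$; hence $d_Z\le 2^{r_2+1}$.

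Combining the two bounds with $\min\{a,b\}\le\sqrt{ab}$ and $r_2\le r_1$ gives
\[ d\bigl(\CSS(C_1(s),C_2(s))\bigr)\;\le\;\min\{2^{m-r_1},\,2^{r_2+1}\}\;\le\;2^{(m-r_1+r_2+1)/2}\;\le\;2^{(m+1)/2}, \]
so the relative minimum distance is at most $2^{(m+1)/2}/2^{m}=2^{-(m-1)/2}$, which tends to $0$ since $m(s)\to\infty$; this is exactly the assertion. There is no serious obstacle here: the only point worth flagging is that one needs the \emph{upper} bound $d\le\min\{d_X,d_Z\}$ on the quantum distance rather than the lower bound $\min\{d_1,d_2^\perp\}$ quoted earlier, and that this upper bound is furnished by the explicit minimum-weight logical representatives above — which exist precisely because the relevant pairs of Reed-Muller codes are strictly nested with strictly separated minimum weights.
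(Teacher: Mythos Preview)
Your argument is correct. The paper proceeds differently: it argues by contradiction, assuming nonvanishing relative distance and then invoking Proposition~\ref{prop:dist} to force both $r_1(s)\to c_1$ and $r_2^\perp(s)=m(s)-r_2(s)-1\to c_2$; since $m(s)\to\infty$ the latter forces $r_2(s)\to\infty$, contradicting $r_2(s)\le r_1(s)\to c_1$. Your route is direct and quantitative: you exhibit explicit logical operators of weight $2^{m-r_1}$ and $2^{r_2+1}$, then use $\min\{a,b\}\le\sqrt{ab}$ and $r_2\le r_1$ to obtain the concrete bound $d/n\le 2^{-(m-1)/2}$. Both arguments ultimately rest on the same observation --- that a minimum-weight word of $\RM(r_1,m)$ (respectively $\RM(m-r_2-1,m)$) cannot lie in the strictly smaller subcode $\RM(r_2,m)$ (respectively $\RM(m-r_1-1,m)$), so the CSS distance is bounded above by $d_1$ and by $d_2^\perp$ --- but you make this step explicit, whereas the paper's contradiction needs it implicitly to pass from nonvanishing CSS distance to the convergence of $r_1(s)$ and $r_2^\perp(s)$. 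Your version buys an explicit decay rate; the paper's version ties the result more visibly to the earlier asymptotic framework.
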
 

\begin{proof}
    Recall that $\CSS(C_1,C_2)$ requires $C_2\subseteq C_1$, so $r_2(s) \leq r_1(s)$ for all $s\in\N$. Suppose, seeking contradiction, that the sequence of CSS codes has a nonvanishing relative minimum distance. Then, we have $r_1(s)\to c_1$, $r_2^\perp(s)\to c_2$ as $s\to\infty$, for some $c_1,c_2\in\R$. However, $r_2^\perp(s) = m(s) - r_2(s) - 1$, so $r_2^\perp(s)\to c_2$ and $m(s)\to\infty$ implies that $r_2(s)\to\infty$. It follows from $r_2(s) \leq r_1(s)$ that $r_1(s)\to\infty$, a contradiction. So, such a sequence of CSS codes has always vanishing relative minimum distance.
\end{proof}




Now, we focus on characterizing all CSS-T codes, specifically those defined by Reed-Muller codes. Recall, from Definition~\ref{d:csst}, that the fundamental property that makes a CSS code a CSS-T code involves self-dual codes contained in restrictions of dual codes. This leads to the following lemma. 

\begin{lemma}\label{l:rate} 

 The maximum rate of a $CSS(C_1,C_2)$ CSS-T code where  $C_2$ is a RM code is $\frac{1}2$.
 \end{lemma}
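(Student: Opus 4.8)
The plan is to bound the rate of any CSS-T code $\CSS(C_1,C_2)$ with $C_1=\RM(r_1,m)$ and $C_2=\RM(r_2,m)$ using nothing but the dimension of $C_1$. Such a code encodes $\dim C_1-\dim C_2$ logical qubits into $n=2^m$ physical qubits, so its rate is
\[
R=\frac{\dim C_1-\dim C_2}{2^m}\;\le\;\frac{\dim C_1}{2^m},
\]
and it therefore suffices to show $\dim C_1\le 2^{m-1}$. I would extract this inequality from condition~(2) of Definition~\ref{d:csst}, applied to a single, well-chosen codeword of $C_2$.

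The convenient codeword is the all-ones vector $\mathbf 1=\ev(1)$, which lies in $C_2=\RM(r_2,m)$ simply because $1$ is a polynomial of degree $0\le r_2$. By condition~(2) there is a self-dual code $D\subseteq C_1^\perp$ supported on $\mathbf 1$; since $\supp(\mathbf 1)=\F_2^m$ is the entire coordinate set, $D$ is a genuine self-dual binary code of length $2^m$, so $\dim D=2^{m-1}$. From $D\subseteq C_1^\perp$ and $(C_1^\perp)^\perp=C_1$ we get $C_1\subseteq D^\perp=D$, whence $\dim C_1\le 2^{m-1}$, as needed. Equivalently, condition~(2) applied at $\mathbf 1$ forces $C_1$ to be self-orthogonal, i.e.\ $\RM(r_1,m)\subseteq\RM(m-r_1-1,m)$, i.e.\ $r_1\le\floor{\frac{m-1}{2}}$, and then $\dim\RM(r_1,m)=\sum_{i=0}^{r_1}\binom{m}{i}\le 2^{m-1}$ by symmetry of the binomial coefficients. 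Substituting into the first display gives $R\le\tfrac12$.

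There is essentially one idea here — that the CSS-T support condition must in particular hold at the all-ones codeword, and that for the full word this is exactly self-orthogonality of $C_1$ — and after that everything is the elementary identity $\dim C_1+\dim C_1^\perp=2^m$, so I do not expect a genuine obstacle. The only point worth a sentence is that the argument needs $\mathbf 1\in C_2$, which holds for every Reed-Muller code $\RM(r_2,m)$ with $r_2\ge 0$; the lone excluded case is the degenerate $C_2=\{0\}$, where the CSS-T conditions are vacuous. Finally, to see that $\tfrac12$ cannot be improved, note that for odd $m$ the code $\RM(\frac{m-1}{2},m)$ is self-dual, so $\CSS(\RM(\frac{m-1}{2},m),\RM(0,m))$ is a CSS-T code of rate $(2^{m-1}-1)/2^m$, which tends to $\tfrac12$ (and equals $\tfrac12$ exactly if one also allows the trivial choice $C_2=\{0\}$).
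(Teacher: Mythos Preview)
Your argument is correct. The paper does not actually supply a proof of this lemma: it is stated immediately after the corollary on vanishing minimum distance with the sentence ``This corollary leads to the following upper bound,'' but no derivation is given, and the distance corollary does not by itself produce a rate bound. Your approach---applying condition~(2) of Definition~\ref{d:csst} to the all-ones word $\mathbf 1\in C_2$ so as to force $C_1\subseteq C_1^\perp$, hence $r_1\le\floor{(m-1)/2}$ and $\dim C_1\le 2^{m-1}$---is precisely the fact the paper later invokes without justification (in the proof of Theorem~\ref{t:css-t-family} the inequality $r_1(m)\le\floor{(m-1)/2}$ is attributed to Lemma~\ref{l:rate}). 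So your proof both fills a gap and matches how the authors use the lemma downstream; the easy direction of Proposition~\ref{p:self-ort} that you reprove inline is all that is needed. Your tightness remark via $\CSS(\RM((m-1)/2,m),\RM(0,m))$ for odd $m$ is also a welcome addition, anticipating the paper's first example in Section~\ref{s:examples}.
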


\begin{proof}
    Since $C_2$ is a RM code, it holds that $(1,\dots,1)\in C_2$. Because $\supp(1,\dots,1)=[n]$, by Definition~\ref{d:csst} there exists a self-dual code contained in $C_1^\perp$, which implies that $\dim C_1^\perp \geq \frac{n}{2}$. The bound on the rate of the CSS-T code follows from the following bound on its dimension  
    \[\dim C_1-\dim C_2\leq \dim C_1 = n-\dim C_1^\perp \leq \frac{n}{2}.\]
\end{proof}



The following proposition helps us reformulate the definition of a CSS-T code.

\begin{proposition}\label{p:self-ort}
An $[n, k]$ binary linear code $C$ contains a self-dual code if and only if $n$ is even and $C^\perp$ is self-orthogonal, meaning that $C^\perp\subseteq C$.
\end{proposition}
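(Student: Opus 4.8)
The plan is to prove both directions of the equivalence by linear-algebra arguments over $\F_2$, using the fact that a self-dual code of length $n$ has dimension exactly $n/2$ (which already forces $n$ to be even).

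For the forward direction, suppose $D \subseteq C$ with $D = D^\perp$. Then $n = 2\dim D$ is even, and dualizing the inclusion $D \subseteq C$ gives $C^\perp \subseteq D^\perp = D \subseteq C$, so $C^\perp$ is self-orthogonal. This direction is immediate.

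For the converse, assume $n$ is even and $C^\perp \subseteq C$, so in particular $C^\perp$ is self-orthogonal and $\dim C^\perp = n - \dim C \le \dim C = n/2 + \tfrac{1}{2}(\dim C - \dim C^\perp)$; write $k = \dim C$ and note $\dim C^\perp = n-k \le n/2 \le k$. The goal is to produce a self-dual code $D$ with $C^\perp \subseteq D \subseteq C$. Consider the quotient space $C/C^\perp$; since $C^\perp$ is self-orthogonal, the bilinear form inherited from the standard inner product on $\F_2^n$ descends to a well-defined, nondegenerate symmetric bilinear form on $C/C^\perp$ (nondegeneracy is exactly the statement that the radical of the form restricted to $C$ is $C \cap (C^\perp)^\perp = C \cap C = C^\perp$). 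A self-dual code $D$ with $C^\perp \subseteq D \subseteq C$ corresponds precisely to a maximal totally isotropic subspace of $C/C^\perp$ of dimension $\tfrac{1}{2}\dim(C/C^\perp)$. The existence of such a subspace for a nondegenerate symmetric bilinear form over $\F_2$ on an even-dimensional space is standard, but there is a subtlety over characteristic $2$: one must check the form is not only nondegenerate but that the associated quadratic structure admits a Lagrangian; here the relevant fact is that over $\F_2$ a nondegenerate alternating-or-symmetric form on an even-dimensional space always has a half-dimensional totally isotropic subspace (the $x \mapsto \langle x,x\rangle$ issue is handled because we may also invoke that $n$ even combined with $C^\perp$ self-orthogonal lets us build $D$ by greedily extending $C^\perp$ one vector at a time, adjoining at each stage a vector orthogonal to everything chosen so far and to itself, which is possible as long as the current code is not yet self-dual).

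The main obstacle is this last existence step: showing that a code $C'$ with $C^\perp \subseteq C' \subseteq C$, $C'$ self-orthogonal, and $\dim C' < n/2$ can always be enlarged by one dimension while staying self-orthogonal and inside $C$. I would argue: since $\dim C' < n/2$ we have $C' \subsetneq (C')^\perp$, and since $C' \subseteq C$ we have $C^\perp = C^\perp \subseteq (C')^\perp$ — actually $C \subseteq (C^\perp)^\perp \subseteq (C')^\perp$ is not quite it; rather $(C')^\perp \supseteq C^\perp$ and $(C')^\perp \cap C \supsetneq C'$ by a dimension count ($\dim((C')^\perp \cap C) \ge \dim(C')^\perp + \dim C - n = (n - \dim C') + \dim C - n = \dim C - \dim C' \ge k - \dim C' > 0$ when $\dim C' < k$, and more carefully one shows it strictly exceeds $\dim C'$). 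Pick $v \in ((C')^\perp \cap C)\setminus C'$; if $\langle v,v\rangle = 0$ we set $C'' = C' + \langle v\rangle$ and are done; if $\langle v,v \rangle = 1$, I would replace $v$ by $v + w$ for a suitable $w \in (C')^\perp \cap C$ with $\langle w,w\rangle = 1$ (such $w$ exists unless the form on $((C')^\perp\cap C)/C'$ has all vectors isotropic, in which case any $v$ works) to restore isotropy — this parity bookkeeping over $\F_2$ is the only genuinely delicate point, and I expect the cleanest writeup to package it as: among any two non-isotropic vectors $v_1,v_2$ with $\langle v_1,v_2\rangle$ arbitrary, some $\F_2$-combination is isotropic and nonzero, which forces termination at dimension exactly $n/2$.
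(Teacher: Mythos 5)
Your argument is correct in spirit but takes a genuinely different route from the paper, and one step needs cleaning up. The paper's converse direction is non-constructive: it forms the set $S$ of codes $\overline{C}$ with $C^\perp \subseteq \overline{C} \subseteq C$ and $\dim \overline{C} = n/2$, observes $S$ is closed under taking duals, computes $|S|$ as a Gaussian binomial coefficient (a ratio of products of odd integers, hence odd), and concludes by a parity argument that duality has a fixed point in $S$, i.e.\ a self-dual code. You instead build the self-dual code constructively, greedily enlarging a self-orthogonal $C'$ with $C^\perp \subseteq C' \subseteq C$ one isotropic vector at a time. The counting proof is slicker and shorter; your greedy proof is algorithmic and more elementary, at the cost of the case analysis you rightly flag as the delicate point.

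That case analysis can be packaged much more cleanly than you do, and your dimension count is both roundabout and slightly off. Since $C^\perp \subseteq C'$, dualizing gives $(C')^\perp \subseteq (C^\perp)^\perp = C$ automatically, so $(C')^\perp \cap C = (C')^\perp$, which has dimension $n - \dim C' > \dim C'$ whenever $\dim C' < n/2$; no appeal to $\dim(U \cap W) \ge \dim U + \dim W - n$ or to a ``more carefully one shows'' hedge is needed. For isotropy, observe that $Q(\bar v) := \langle v, v\rangle$ is a well-defined $\F_2$-\emph{linear} functional on $(C')^\perp/C'$: well-definedness uses that $C'$ is self-orthogonal, and additivity is automatic over $\F_2$ because $\langle v+w, v+w\rangle = \langle v,v\rangle + \langle w,w\rangle$. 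Hence whenever $\dim\bigl((C')^\perp/C'\bigr) \ge 2$, i.e.\ $\dim C' \le n/2 - 1$, the kernel of $Q$ contains a nonzero coset, yielding $v \in (C')^\perp \setminus C'$ with $\langle v,v\rangle = 0$, and $C' + \langle v \rangle$ is again self-orthogonal with $C^\perp \subseteq C' + \langle v \rangle \subseteq C$. Iterating from $C' = C^\perp$ terminates exactly at dimension $n/2$, which is where $n$ even enters. This replaces your $v \mapsto v+w$ repair, which as written has a residual gap: you also need $v + w \notin C'$, and if the unique non-isotropic coset of $(C')^\perp/C'$ is $v + C'$ then no suitable $w$ exists and you must instead start over with $v$ chosen from an isotropic coset --- a case your split does not explicitly cover.
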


\begin{proof}
  We begin by proving the forward direction. Let $C$ be an $[n,k]$ linear code that contains a self-dual code $D=D^\perp$. Since $D\subseteq C$, we see $C^\perp\subseteq D^\perp=D$. Combining these two relationships, we see $C^\perp\subseteq D\subseteq C$. Thus, $C^\perp$ is self-orthogonal.

We now prove the backward implication. Define \[S=\left\{\overline{C}\mid  C^\perp\subset \overline{C}\subset C, \dim(\overline{C})=\frac{n}{2}\right\}\] to be the set of subcodes of $C$ containing $C^\perp$ of dimension half of their length. 
It holds that $\overline{C}\in S$ if and only if $\overline{C}^\perp\in S$. Indeed, if $\overline{C}\in S$ we know that  $C^\perp\subseteq\overline{C}\subseteq C$. This implies that $C^\perp\subseteq\overline{C}^\perp\subseteq (C^\perp)^\perp=C$ and $\dim(\overline{C}^\perp)=\frac{n}{2}$, so $\overline{C}^\perp\in S$. 

 We use the Gaussian binomial coefficient to calculate the cardinality of $S$. To account only for those codes that contain $C^\perp$, we consider the chain quotient spaces 
 \[C^\perp/C^\perp=\{0\}\subseteq\overline{C}/C^\perp\subseteq C/C^\perp.\] Note that for every $\overline{C}\in S$, there is exactly one $\overline{C}/C^\perp\subseteq C/C^\perp$, so by counting the number of subspaces of the form $\overline{C}/C^\perp$ that are contained in $C/C^\perp$, we find the cardinality of $S$. The dimension of $\overline{C}/C^\perp$ is \[\dim \overline{C}/C^\perp=\dim(\overline{C})-\dim(C^\perp)=\frac{n}{2}-(n-k)=k-\frac{n}{2},\] and the dimension of $C/C^\perp$ is $\dim(C)-\dim(C^\perp)=k-(n-k)=2k-n$. 
 
 Since we know the dimension of each of these spaces, we can now use the Gaussian binomial coefficient to compute the cardinality of $S$ as follows:
\[
|S| = \left[\begin{smallmatrix} 2k-n \\ k-\frac{n}{2} \end{smallmatrix}\right]_2=\prod_{i=0}^{k-\frac{n}{2}-1}\frac{2^{2k-n-i}-1}{2^{i+1}-1}.
\]
Note that $2^{2k-n-i}-1$ and $2^{i+1}-1$ are both odd for all $i\in\mathbb{N}$. Therefore, $|S|$ is odd. 

Recall that each code $\overline{C}$ in $S$ is paired off with exactly one other code, its dual. Since $S$ is odd, at least one self-dual code must be in $S$. Therefore, a self-dual subcode exists in $C$.
\end{proof}

We note that the cardinality of the set $S$ can alternatively be deduced from \cite[Proposition 2.5]{eamer-alberto}.

\begin{definition} 
Given a code $C\subseteq \F_2^n$ and index set $I\subseteq \{1,\dots, n\}$, produce the code $C$ punctured on $I$, denoted $\punct(C,I)$, by removing from every $x\in C$ the coordinates in $I$. The code $C$ shortened on $I$, denoted $\short(C,I)$, consists of $x\in C$ such that $x_i=0$ for all $i\in I$ with coordinates in $I$ removed. 
\end{definition}

The following classical result relating punctured and shortened codes is found in Theorem 1.5.7 of \cite{huffman_pless_2003}.
\begin{theorem}\label{t:short-punct}
Let $C$ be a $[n,k]$ binary linear code and $I\subseteq \{1,\dots,n\}$, then \[\short(C,I)^\perp=\punct(C^\perp,I)\]
\end{theorem}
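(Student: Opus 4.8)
The plan is to prove the easy inclusion by hand and then pin down equality with a dimension count. Write $J=\{1,\dots,n\}\setminus I$ for the complementary index set, so that both $\short(C,I)^\perp$ and $\punct(C^\perp,I)$ are codes of length $|J|$, and for $v\in\F_2^n$ let $v|_J\in\F_2^{|J|}$ denote the restriction of $v$ to the coordinates in $J$. First I would check that $\punct(C^\perp,I)\subseteq\short(C,I)^\perp$: if $w=u|_J$ with $u\in C^\perp$ and $x=v|_J$ with $v\in C$ and $v_i=0$ for all $i\in I$, then $\langle w,x\rangle=\sum_{j\in J}u_jv_j=\sum_{i=1}^{n}u_iv_i=\langle u,v\rangle=0$, where the middle equality uses that $v$ vanishes on $I$. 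Thus every element of $\punct(C^\perp,I)$ is orthogonal to every element of $\short(C,I)$.

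To get the reverse inclusion it suffices to show the two codes have equal dimension. Introduce $E_I=\{v\in\F_2^n:v_i=0 \text{ for all } i\in I\}$ and $F_I=\{v\in\F_2^n:\supp(v)\subseteq I\}$, so that $\dim E_I=|J|$, $\dim F_I=|I|$, and $F_I=E_I^\perp$. The restriction map $v\mapsto v|_J$ is injective on $C\cap E_I$ and has image $\short(C,I)$, hence $\dim\short(C,I)=\dim(C\cap E_I)$ and $\dim\short(C,I)^\perp=|J|-\dim(C\cap E_I)$. The same restriction map applied to $C^\perp$ has kernel $C^\perp\cap F_I$ and image $\punct(C^\perp,I)$, so $\dim\punct(C^\perp,I)=(n-k)-\dim(C^\perp\cap F_I)$.

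Finally I would reconcile the two counts. Since $C^\perp\cap F_I=C^\perp\cap E_I^\perp=(C+E_I)^\perp$, we have $\dim(C^\perp\cap F_I)=n-\dim(C+E_I)$, and $\dim(C+E_I)=\dim C+\dim E_I-\dim(C\cap E_I)=k+|J|-\dim(C\cap E_I)$; since $|J|=n-|I|$ this yields $\dim(C^\perp\cap F_I)=|I|-k+\dim(C\cap E_I)$. Substituting, $\dim\punct(C^\perp,I)=(n-k)-\big(|I|-k+\dim(C\cap E_I)\big)=|J|-\dim(C\cap E_I)=\dim\short(C,I)^\perp$, so the inclusion from the first paragraph is forced to be an equality.

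Everything here is routine; the step that needs the most care is the bookkeeping of the last paragraph, in particular the identifications $F_I=E_I^\perp$ and $C^\perp\cap F_I=(C+E_I)^\perp$ together with the correct use of $\dim(C+E_I)=\dim C+\dim E_I-\dim(C\cap E_I)$. One could instead attempt the reverse inclusion directly: given $w\in\short(C,I)^\perp$, extend it to $\tilde w\in\F_2^n$ with $\tilde w|_J=w$ and check that the coordinates indexed by $I$ can be chosen so that $\tilde w\in C^\perp$; but showing that this linear system is consistent is essentially the same computation as the dimension argument, so I would keep the dimension-counting route.
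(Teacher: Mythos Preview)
Your proof is correct. Note, however, that the paper does not actually prove this theorem: it merely cites it as Theorem~1.5.7 of Huffman--Pless, so there is no in-paper argument to compare against. Your approach (one containment by direct orthogonality, then equality via a dimension count using $C\cap E_I$, $C^\perp\cap F_I$, and the identity $(C+E_I)^\perp=C^\perp\cap E_I^\perp$) is the standard textbook argument and is essentially what one finds in Huffman--Pless.
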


It follows that one can express the defining property of CSS-T codes in terms of punctured and shortened codes.

\begin{corollary}
Let $C_1$ and $C_2$ be two linear codes and $C=\CSS(C_1,C_2)$ a CSS code. $C$ is a CSS-T code if and only if for all $x\in C_2$ it holds that \[\punct(C_1,\F_2^m\setminus \supp(x))\subseteq \short(C_1^\perp,\F_2^m\setminus \supp(x)).\]
\label{cor:short}
\end{corollary}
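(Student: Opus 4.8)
The plan is to unwind Definition~\ref{d:csst} one condition at a time and translate each piece using the tools already assembled. Fix $x \in C_2$ and write $I = \F_2^m \setminus \supp(x)$ for the complement of its support. The crucial reformulation to establish is: \emph{a self-dual code supported on $x$ exists inside $C_1^\perp$ if and only if $\punct(C_1,I) \subseteq \short(C_1^\perp,I)$}. Once this equivalence is in hand for each individual $x$, quantifying over all $x \in C_2$ gives exactly the stated corollary, since $C = \CSS(C_1,C_2)$ being a CSS-T code means precisely that condition (2) of Definition~\ref{d:csst} holds for every codeword of $C_2$ (condition (1), that $C_2$ is even, is not captured by the displayed inclusion and is presumably being folded into the standing hypothesis that $C$ is a CSS code; I would add a sentence noting this, or verify that evenness of $C_2$ follows from the inclusion holding at every $x$).

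First I would make precise what ``a self-dual code in $C_1^\perp$ supported on $x$'' means: it is a self-dual code whose coordinates off $\supp(x)$ are all zero, i.e.\ a self-dual code $D$ with $D \subseteq \short(C_1^\perp, I)$ after identifying $\short(C_1^\perp,I)$ with the subcode of $C_1^\perp$ vanishing on $I$ (living now in $\F_2^{\supp(x)}$). Self-duality of $D$ is an intrinsic property of $D$ as a code of length $|\supp(x)| = \mathrm{wt}(x)$, so the question becomes: does the length-$\mathrm{wt}(x)$ code $\short(C_1^\perp, I)$ contain a self-dual subcode? Now apply Proposition~\ref{p:self-ort} with $C = \short(C_1^\perp, I)$: this holds iff $\mathrm{wt}(x)$ is even and $\short(C_1^\perp,I)^\perp \subseteq \short(C_1^\perp,I)$. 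By Theorem~\ref{t:short-punct}, $\short(C_1^\perp, I)^\perp = \punct((C_1^\perp)^\perp, I) = \punct(C_1, I)$, so the self-orthogonality condition becomes exactly $\punct(C_1,I) \subseteq \short(C_1^\perp,I)$, which is the displayed inclusion.

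The remaining point is the evenness of $\mathrm{wt}(x)$. Here I would observe that the inclusion $\punct(C_1,I) \subseteq \short(C_1^\perp,I)$ being nonempty already forces $\short(C_1^\perp,I)$ to contain its own dual, hence to be self-orthogonal of length $\mathrm{wt}(x)$; a self-orthogonal binary code containing a self-dual subcode requires even length, but more directly, if $C$ is a CSS code with $C_2$ even then every $x \in C_2$ has even weight, so $|\supp(x)| = \mathrm{wt}(x)$ is even automatically and the parity hypothesis of Proposition~\ref{p:self-ort} is satisfied for free. I expect the main obstacle to be purely bookkeeping: keeping straight the two ways of viewing $\short(C_1^\perp,I)$ — as a subcode of $\F_2^n$ whose entries on $I$ vanish, versus as a code of length $\mathrm{wt}(x)$ on the coordinate set $\supp(x)$ — and checking that ``supported on $x$'' in Definition~\ref{d:csst} matches the shortening operation exactly, including that a self-dual subcode of the shortened code, re-embedded, is still self-dual \emph{as a code on $\supp(x)$} (which is all Definition~\ref{d:csst} asks). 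No step requires a genuinely new idea; it is a chain of three previously stated results (Proposition~\ref{p:self-ort}, Theorem~\ref{t:short-punct}, and $(C_1^\perp)^\perp = C_1$) applied coordinatewise to each $x \in C_2$.
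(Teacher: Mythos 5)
Your argument follows the paper's proof essentially verbatim: identify the "self-dual code in $C_1^\perp$ supported on $x$" with a self-dual subcode of $\short(C_1^\perp, I)$, apply Proposition~\ref{p:self-ort} to that shortened code, and then invoke Theorem~\ref{t:short-punct} to rewrite $\short(C_1^\perp,I)^\perp$ as $\punct(C_1,I)$. That is exactly the chain the paper uses, and the bookkeeping worries you flag (viewing $\short(C_1^\perp,I)$ both as a length-$\mathrm{wt}(x)$ code and as a subcode of $C_1^\perp$ supported inside $\supp(x)$) are real but routine, as you suspected.

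The one place where your write-up is shakier than it needs to be is the parity condition for the backward implication. Your first suggestion ("a self-orthogonal binary code containing a self-dual subcode requires even length") is circular, since whether a self-dual subcode exists is precisely the thing being decided; and your second ("if $C_2$ is even then every $x$ has even weight") assumes condition (1) of Definition~\ref{d:csst}, which is part of what must be deduced in the $\Leftarrow$ direction. The clean fix is available from the inclusion itself: since $x \in C_2 \subseteq C_1$, the punctured vector $\punct(x, I)$ is the all-ones vector $\mathbf{1}$ of length $\mathrm{wt}(x)$ and lies in $\punct(C_1,I)$. The hypothesized inclusion puts $\mathbf{1} \in \short(C_1^\perp,I)$, and since $\punct(C_1,I) = \short(C_1^\perp,I)^\perp$, we get $\mathbf{1}\cdot\mathbf{1}=0$, i.e.\ $\mathrm{wt}(x)$ is even. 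This simultaneously supplies the parity hypothesis of Proposition~\ref{p:self-ort} and shows that $C_2$ is an even code, so condition (1) of Definition~\ref{d:csst} is genuinely a consequence of the displayed inclusion rather than an extra hypothesis to be appended. The paper glosses over this point entirely ("we only need to focus on the self-dual containment condition"), so you were right to flag it; it just deserved the sharper argument above.
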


\begin{proof}
We only need to focus on the self-dual containment condition. Let $x\in C_2$, so  by Proposition \ref{p:self-ort} we have that $C_1^\perp|_{\supp(x)}$ contains a self-dual code if and only if $(C_1^\perp|_{\supp(x)})^\perp$ is self-orthogonal, meaning that \[(C_1^\perp|_{\supp(x)})^\perp\subseteq C_1^\perp|_{\supp(x)}.\] Since $C_1^\perp|_{\supp(x)}  =\short(C_1^\perp,\F_2^m\setminus {\supp(x)})$ and thanks to Theorem \ref{t:short-punct}, the condition translates into
\begin{align*}
    \punct(C_1,\F_2^m\setminus {\supp(x)})&=\short(C_1^\perp,\F_2^m\setminus {\supp(x)})^\perp\\ &\subseteq \short(C_1^\perp,\F_2^m\setminus {\supp(x)}).
\end{align*}

\end{proof}

We can now completely characterize CSS-T codes from Reed-Muller codes. 
\begin{theorem}\label{t:css-t}
Let $C_1=RM(\floor{\frac{m-1}{2}} - t, m)$ and  $C_2=RM(r_2,m)$. Then $\CSS(C_1,C_2)$ is a CSS-T code if and only if $r_2\leq 2t+1$ for $m$ even, or 
$r_2\leq 2t$ for $m$ odd.
\end{theorem}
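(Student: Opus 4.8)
The plan is to push the CSS-T condition through Corollary~\ref{cor:short} and Theorem~\ref{t:short-punct} until it becomes a self-orthogonality statement about punctured Reed--Muller codes, and then convert that into an elementary condition on polynomial degrees. First I would observe that applying Theorem~\ref{t:short-punct} to $C_1^\perp$ gives $\short(C_1^\perp,I)^\perp=\punct(C_1,I)$ for $I=\F_2^m\setminus\supp(x)$, so the containment in Corollary~\ref{cor:short} says exactly that $\punct(C_1,\F_2^m\setminus\supp(x))$ is self-orthogonal for every $x\in C_2$. (The evenness of $C_2$ in Definition~\ref{d:csst} is automatic: $\RM(r_2,m)$ is even whenever $r_2\le m-1$, and in the stated range $r_2\le 2t+1\le m-1$ once $C_1$ has nonnegative order; I would also record the standing hypothesis $C_2\subseteq C_1$ needed for $\CSS(C_1,C_2)$ to be defined.) Writing $x=\ev(p)$ with $\deg p\le r_2$ and using $p(v)\in\{0,1\}$, self-orthogonality of $\punct(C_1,\F_2^m\setminus\supp(x))$ is equivalent to $\sum_{v\in\F_2^m}p(v)f(v)g(v)=0$ in $\F_2$ for all $f,g$ with $\deg f,\deg g\le r_1:=\floor{\frac{m-1}{2}}-t$, since that sum is precisely the inner product of $\ev(f)$ and $\ev(g)$ restricted to $\supp(x)$.

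The engine of the argument is the standard fact that, for $h\in\F_2[x_1,\dots,x_m]$, the sum $\sum_{v\in\F_2^m}h(v)$ equals the coefficient of the top monomial $x_1\cdots x_m$ in the multilinear reduction of $h$ modulo $\langle x_i^2-x_i\rangle$; in particular the sum vanishes whenever that reduction has degree strictly less than $m$. For the sufficiency direction I would take $p,f,g$ to be multilinear of degrees at most $r_2,r_1,r_1$; each monomial occurring in $pfg$ reduces to $x^{S\cup T\cup U}$ with $|S\cup T\cup U|\le r_2+2r_1$, so if $r_2+2r_1\le m-1$ the sum vanishes for all admissible $f,g,p$, and $\CSS(C_1,C_2)$ is CSS-T. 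A short computation with the floor then shows that $r_2+2r_1\le m-1$ is exactly $r_2\le 2t+1$ when $m$ is even and $r_2\le 2t$ when $m$ is odd.

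For necessity I would show that if $r_2+2r_1\ge m$ then the code fails to be CSS-T by exhibiting an obstruction: choose subsets $S,T,U\subseteq\{1,\dots,m\}$ with $|S|\le r_2$, $|T|,|U|\le r_1$, and $S\cup T\cup U=\{1,\dots,m\}$, which is possible precisely because $r_2+2r_1\ge m$ (fill $S$ first, then split the at most $2r_1$ remaining coordinates between $T$ and $U$), and set $p=\prod_{i\in S}x_i$, $f=\prod_{i\in T}x_i$, $g=\prod_{i\in U}x_i$. Then $pfg$ reduces to $x_1\cdots x_m$, so $\sum_{v\in\F_2^m}p(v)f(v)g(v)=1\neq0$, which exhibits two codewords of $\punct(C_1,\F_2^m\setminus\supp(\ev(p)))$ with nonzero inner product while $\ev(p)\in C_2$; hence that punctured code is not self-orthogonal and $\CSS(C_1,C_2)$ is not CSS-T. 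Translating $r_2+2r_1\ge m$ back through the floor gives the negations of the stated inequalities, completing the equivalence.

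I expect the genuine obstacle to be the first, bookkeeping step: correctly unwinding the ``there exists a self-dual code in $C_1^\perp$ supported on $x$'' condition through Proposition~\ref{p:self-ort} and two applications of Theorem~\ref{t:short-punct} into the clean statement that a concrete punctured Reed--Muller code is self-orthogonal. Once that reformulation is in place the rest is a degree count via the top-monomial lemma, with the only remaining care needed in the even/odd case split for $\floor{\frac{m-1}{2}}$ and in checking the small boundary cases $r_1=0$ and $r_2\ge m$ in the necessity construction.
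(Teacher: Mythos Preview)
Your proposal is correct and, in fact, more complete than the paper's own argument. Both approaches reach sufficiency via the same degree bound $r_2+2r_1\le m-1$, but they package it differently: the paper uses the decomposition $q=(1-p)q+pq$ to identify $\punct(\ev(q),\F_2^m\setminus\supp(p))$ with $\short(\ev(pq),\F_2^m\setminus\supp(p))$ and then checks $\deg(pq)\le m-r_1-1$ directly, whereas you recast the containment of Corollary~\ref{cor:short} as self-orthogonality of $\punct(C_1,\cdot)$ and invoke the top-monomial sum formula $\sum_{v\in\F_2^m}h(v)=[x_1\cdots x_m]\,\overline{h}$. These are equivalent, since $\ev(pq)\in C_1^\perp$ is exactly the statement that $\sum_v p(v)q(v)g(v)=0$ for all $g$ of degree at most $r_1$.

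The substantive difference is necessity: the paper's proof, as written, establishes only the ``if'' direction despite the theorem being stated as an equivalence. Your explicit monomial construction (choosing $S,T,U$ with $S\cup T\cup U=[m]$ when $r_2+2r_1\ge m$) supplies the missing ``only if''. So your route buys a genuinely complete proof at the modest cost of importing the standard evaluation-sum lemma, while the paper's decomposition trick is slightly slicker for sufficiency but leaves the converse unaddressed.
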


\begin{proof}
Let $x\in C_2$. Then there exists a polynomial $p\in \polyring{r_2}$ such that $x=\ev(p)$. For simplicity in notation, let $r_1=\floor{\frac{m-1}{2}} - t$, so $C_1^\perp=\RM(m-r_1-1,m)$. Let $y\in C_1$, so there exists a polynomial $q\in \polyring{r_1}$ such that $y=\ev(q)$. 

Consider the polynomials $p$ and $1-p$. Because we are working over the binary field, it holds that $\supp(1-p)$ and $\supp(p)$ partition the set $\F_2^m$, meaning that $\supp(1-p)\cup \supp(p)=\F_2^m$ and $\supp(1-p)\cap \supp(p)=\emptyset$. This gives us a tool to split $\supp(q)$; specifically, given the polynomial decomposition $q=(1-p)q+pq$, it holds that $\supp((1-p)q)\subseteq \supp(1-p)$ and $\supp(pq)\subseteq \supp(p)$. Moreover, by the linearity of the evaluation map, it follows that $\ev(q)=\ev((1-p)q)+\ev(pq)$. 

Recall that $\supp(p)=\supp(x)$ whenever $x=\ev(p)$. Then

\begin{align*}
    \punct(y,\F_2^m\setminus \supp(x))&= \punct(\ev(q),\F_2^m\setminus \supp(p))\\
    &= \punct(\ev((1-p)q)+\ev(pq),\supp(1-p))\\
    &= \punct(\ev(pq), \supp(1-p))\\&=\short(\ev(pq),\F_2^m\setminus \supp(p))
\end{align*}

where the last equality holds because $\supp(pq)\subseteq \supp(p)$.

To conclude, it is enough to prove that $pq\in\polyring{m-r_1-1}$, implying that $\ev(pq)\in C_1^\perp$. Indeed, 
\begin{align*}
    \deg(pq)&\leq r_1+r_2 = \floor{\frac{m-1}{2}} - t +r_2\\
    &=\begin{cases}
        h-1-t+2t+1, & m=2h \\ 
        h-t+2t, & m=2h+1
    \end{cases}\\
    &=h+t=m-r_1-1.
\end{align*}
Thus by Corollary~\ref{cor:short}, $\CSS(C_1,C_2)$ is indeed CSS-T.
\end{proof}

We conclude this section with the final theorem joining all of our results.

\begin{theorem}\label{t:css-t-family}  
Let $t(m)$ be such that $\lim_{m\to \infty} t(m)=\infty$. Let   \[C_1(m)=\RM\left(\floor{\frac{m-1}{2}} - t(m), m\right) \mbox{ and }  C_2(m)=\RM(r_2(m),m)\] where $r_2(m)\leq 2t(m)+1$ for $m$ even, or 
$r_2(m)\leq 2t(m)$ for $m$ odd. Then, for $m\in \N$, the code $\CSS(C_1(m),C_2(m))$ is a \[\left\llbracket n, \floor{\frac{m-1}{2}} - t(m) - r_2(m), \geq 2^{r_2(m)+1}\right\rrbracket\] CSS-T code.  Furthermore, If $\lim_{m\to \infty} \frac{t(m)}{\sqrt{m}}=c<\infty$ where $c$ is a non-negative constant, the code $\CSS(C_1(m),C_2(m))$ has non-vanishing asymptotic rate \[\Phi(-2c)\leq \frac{1}2.\] Else, if  $\lim_{m\to \infty} \frac{t(m)}{\sqrt{m}}=\infty$, the code has vanishing rate.

\end{theorem}

\begin{proof}
Theorem \ref{t:css-t} implies that  codes $\CSS(C_1(m),C_2(m))$ are CSS-T codes for any $m\in \N$.

The dimension of each CSS-T code is a direct consequence of the formula of the dimension of a CSS code based on codes $C_1(m)$ and $C_2(m)$, meaning that
\[\dim(\CSS(C_1(m),C_2(m)))=\dim C_1(m)-\dim C_2(m)=\floor{\frac{m-1}{2}} - t(m)-r_2(m).\]
For the minimum distance, for simplicity, denote with $r_1(m)$ maximum degree of the multivariate polynomials defining the codewords of $C_1(m)$. Then, by the definition of minimum distance of a CSS code, 
\begin{align*}
    d&\geq \min\{d_1, d_2^\perp\}=\min \{2^{m-r_1(m)},2^{m-(m-r_2(m)-1)}\}=2^{r_2(m)+1}
\end{align*}
since $r_2(m)\leq r_1(m)\leq \floor{ \frac{m-1}2}$ by Lemma \ref{l:rate}.

Finally, the asymptotic rate of the family of codes $(C_2(m))_{m\in \N}$ is vanishing since 
\begin{align*}
    \lim_{m \to \infty} \frac{2r_2(m) - m}{\sqrt{m}}&=\lim_{m \to \infty}\frac{2(2t(m)+1) - m}{\sqrt{m}}\\
    &= 4\lim_{m \to \infty}\frac{t(m)}{\sqrt{m}} +\lim_{m \to \infty}\frac{2}{\sqrt{m}}-\lim_{m \to \infty} \frac{m}{\sqrt{m}}=-\infty
\end{align*}
As a consequence, the asymptotic rate of the family of CSS-T codes defined by $\CSS(C_1(m), C_2(m))$ is the same as the family of Reed-Muller codes $(C_1(m))_{m\in \N}$ given by
\begin{align*}
    \Phi\left(\lim_{m \to \infty} \frac{2\left(\floor{\frac{m-1}{2}}-t(m)\right)-m)}{\sqrt{m}}\right) = \Phi\left(\lim_{m \to \infty} -2\frac{t(m)}{\sqrt{m}}\right)=\Phi(-2c)\leq \frac{1}{2}.
\end{align*}
by Proposition \ref{prop:vanish}.\end{proof}
\section{Examples}\label{s:examples}
Finally, we note that the complete characterization of Theorem~\ref{t:css-t-family} subsumes and expands upon existing examples of CSS-T codes in the literature. 

\begin{example}  
Let $t(m) = 0$ and $r_2(m) = 0$. Take $C_1(m)$ and $C_2(m)$ as defined in Theorem \ref{t:css-t-family}. Then, for all $m\in\N$, $\CSS(C_1(m), C_2(m))$ forms a CSS-T code. Further, by Theorem \ref{t:css-t-family}, the asymptotic rate of this family is 
\[ \Phi(0) = \frac{1}{2}. \]
This family of CSS-T codes achieves the maximum possible asymptotic rate.
\end{example}



\begin{example}  
In \cite{CSS-T-it}, a family of CSS-T codes from Reed-Muller codes is proposed. We describe an equivalent family of codes using the techniques from Theorem \ref{t:css-t-family}.

Let $m$ be divisible by 3. Then, define $t(m)$ as follows:
    \begin{align*}
        t(m) &= \frac{1}{3}\floor{\frac{m-1}{2}} - \frac{2}{3}   &\textrm{($m$ even)} \\
        t(m) &= \frac{1}{3}\floor{\frac{m-1}{2}} - \frac{1}{3}   &\textrm{($m$ odd)}.
    \end{align*}
    Let $r_1$ be defined as in Theorem \ref{t:css-t-family}, and let $r_2$ be equal to its upper bound as stated in Theorem \ref{t:css-t-family}. That is, 
    \begin{align*}
        r_2(m) &= 2t(m) + 1  &\textrm{($m$ even)} \\
        r_2(m) &= 2t(m) &\textrm{($m$ odd)}.
    \end{align*}
    Then, $C_1=\RM(r_1, m)$ and $C_2=\RM(r_2, m)$ define a CSS-T code by Theorem \ref{t:css-t-family}.
\end{example}
\begin{example}
Let us consider the functions of the form:
$$t(m) = \lfloor c \sqrt{m} \rfloor$$
for some positive $c \in \mathbb{Z}$. We can see, without difficulty, that the limit:
$$\lim_{m \rightarrow \infty} \frac{\lfloor c \sqrt{m} \rfloor}{\sqrt{m}} = c$$
By direct application of Theorem 14, we can notice that the CSS code $CSS(C_1,C_2)$ defined by:
$$C_1(m) = RM \left( \left \lfloor \frac{m-1}{2} \right \rfloor - \lfloor c \sqrt{m} \rfloor, m \right)$$
$$C_2(m) = RM(r_2(m),m)$$
is a CSS-T code with non-vanishing asymptotic rate $\Phi(-2c)$. Thus, by this construction, it is possible to generate a CSS-T code with nonvanishing asymptotic rate $\Phi(-2c)$, for any $c \in \N$.{\\}
\end{example}
\begin{example}
Let us consider any linear function, $t(m) = \lfloor a \cdot m \rfloor$, with $0 < a < \frac{1}{8}$. We can see that the CSS code $CSS(C_1,C_2)$ defined by:
$$C_1 \left( \left \lfloor  \frac{m-1}{2} \right \rfloor - \lfloor a m  \rfloor ,m \right)$$
$$C_2 \left( r_2(m),m \right)$$
where:
$$r_2(m) = 
\begin{cases}
2 t(m) + 1  & m \, \, \, \, even \\
2 t(m) & m \, \, \, \, odd 
\end{cases}$$
is a CSS-T code with vanishing rate, for any appropriately chosen $a$.  
\end{example}

\section{Conclusions}
From Theorem 14, we have derived not only a meaningful classification for CSS-T codes of this type but, furthermore, as a corollary, a mechanism for generating CSS-T codes that have their properties and parameters completely determined. Moreover, we give a useful classification of which type of Reed-Muller codes would yield optimal results via this construction. We want to thank Robert Calderbank and Narayanan Rengaswamy for their fruitful comments about this research. 

\bibliographystyle{AIMS}
\bibliography{references.bib}

\providecommand{\href}[2]{#2}
\providecommand{\arxiv}[1]{\href{http://arxiv.org/abs/#1}{arXiv:#1}}
\providecommand{\url}[1]{\texttt{#1}}
\providecommand{\urlprefix}{URL }
\begin{thebibliography}{10}

\bibitem{eczoo_css}
\newblock {C}alderbank-{S}hor-{S}teane ({CSS}) stabilizer code,
\newblock in \emph{The Error Correction Zoo} (eds. V.~V. Albert and P.~Faist),
  2022,
\newblock \urlprefix\url{https://errorcorrectionzoo.org/c/css}.

\bibitem{ourpaper}
\newblock E.~Andrade, J.~Bolkema, T.~Dexter, H.~Eggers, V.~Luongo,
  F.~Manganiello and L.~Szramowski,
\newblock Css-t codes from reed muller codes for quantum fault tolerance, 2023,
\newblock \urlprefix\url{https://arxiv.org/abs/2305.06423}.

\bibitem{berardini_structure_2024}
\newblock E.~Berardini, A.~Caminata and A.~Ravagnani,
\newblock Structure of {CSS} and {CSS}-{T} quantum codes,
\newblock \emph{Designs, Codes and Cryptography}, \textbf{92} (2024),
  2801--2823,
\newblock \urlprefix\url{https://doi.org/10.1007/s10623-024-01415-9}.

\bibitem{MSD}
\newblock S.~Bravyi and J.~Haah,
\newblock Magic-state distillation with low overhead,
\newblock \emph{Phys. Rev. A}, \textbf{86} (2012), 052329,
\newblock \urlprefix\url{https://link.aps.org/doi/10.1103/PhysRevA.86.052329}.

\bibitem{eamer-alberto}
\newblock E.~Byrne and A.~Ravagnani,
\newblock Partition-balanced families of codes and asymptotic enumeration in
  coding theory,
\newblock \emph{Journal of Combinatorial Theory, Series A}, \textbf{171}
  (2020), 105169,
\newblock
  \urlprefix\url{https://www.sciencedirect.com/science/article/pii/S0097316519301505}.

\bibitem{CSofCSS}
\newblock A.~R. Calderbank and P.~W. Shor,
\newblock Good quantum error-correcting codes exist,
\newblock \emph{Phys. Rev. A}, \textbf{54} (1996), 1098--1105,
\newblock \urlprefix\url{https://link.aps.org/doi/10.1103/PhysRevA.54.1098}.

\bibitem{campsmoreno2024quantumcsstcodessparse}
\newblock E.~Camps-Moreno, H.~H. López, G.~L. Matthews and E.~McMillon,
\newblock Toward quantum css-t codes from sparse matrices, 2024,
\newblock \urlprefix\url{https://arxiv.org/abs/2406.00425}.

\bibitem{camps-moreno_algebraic_2024}
\newblock E.~Camps-Moreno, H.~H. López, G.~L. Matthews, D.~Ruano, R.~San-José
  and I.~Soprunov,
\newblock An algebraic characterization of binary {CSS}-{T} codes and cyclic
  {CSS}-{T} codes for quantum fault tolerance,
\newblock \emph{Quantum Information Processing}, \textbf{23} (2024), 230,
\newblock \urlprefix\url{https://doi.org/10.1007/s11128-024-04427-5}.

\bibitem{csstallerton}
\newblock E.~Camps-Moreno, H.~H. López, G.~L. Matthews, D.~Ruano,
  R.~San–José and I.~Soprunov,
\newblock Binary triorthogonal and css-t codes for quantum error correction,
\newblock in \emph{2024 60th Annual Allerton Conference on Communication,
  Control, and Computing}, 2024,
\newblock 01--06.

\bibitem{gottesman1997stabilizer}
\newblock D.~E. Gottesman,
\newblock \emph{Stabilizer Codes and Quantum Error Correction},
\newblock PhD thesis, California Institute of Technology, 1997.

\bibitem{huffman_pless_2003}
\newblock W.~C. Huffman and V.~Pless,
\newblock \emph{Fundamentals of Error-Correcting Codes},
\newblock Cambridge University Press, 2003.

\bibitem{pfister-friends}
\newblock S.~Kudekar, S.~Kumar, M.~Mondelli, H.~D. Pfister, E.~Şaşoǧlu and
  R.~L. Urbanke,
\newblock Reed–{M}uller codes achieve capacity on erasure channels,
\newblock \emph{IEEE Transactions on Information Theory}, \textbf{63} (2017),
  4298--4316.

\bibitem{lidar_brun_2013}
\newblock D.~A. Lidar and T.~A. Brun,
\newblock \emph{Quantum error correction},
\newblock Cambridge University Press, 2013.

\bibitem{CSS-T-isit}
\newblock N.~Rengaswamy, R.~Calderbank, M.~Newman and H.~D. Pfister,
\newblock Classical coding problem from transversal {T} gates,
\newblock in \emph{2020 IEEE International Symposium on Information Theory
  (ISIT)}, 2020,
\newblock 1891--1896.

\bibitem{CSS-T-it}
\newblock N.~Rengaswamy, R.~Calderbank, M.~Newman and H.~D. Pfister,
\newblock On optimality of {CSS} codes for transversal {T},
\newblock \emph{IEEE Journal on Selected Areas in Information Theory},
  \textbf{1} (2020), 499--514.

\bibitem{SofCSS}
\newblock A.~M. Steane,
\newblock Simple quantum error-correcting codes,
\newblock \emph{Phys. Rev. A}, \textbf{54} (1996), 4741--4751,
\newblock \urlprefix\url{https://link.aps.org/doi/10.1103/PhysRevA.54.4741}.

\bibitem{st99}
\newblock A.~Steane,
\newblock Quantum reed-muller codes,
\newblock \emph{IEEE Transactions on Information Theory}, \textbf{45} (1999),
  1701--1703.

\bibitem{zh97}
\newblock L.~Zhang and I.~G. Fuss,
\newblock Quantum {R}eed-{M}uller codes, 1997.

\end{thebibliography}
\end{document}